\newtheorem{lemma}{Lemma}
\newtheorem{claim}{Claim}
\newtheorem{example}{Example}
\newtheorem{remark}{Remark}
\newtheorem{definition}{Definition}
\newcommand{\calA}{\mathcal{A}}
\newcommand{\calN}{\mathcal{N}}
\newcommand{\calP}{\mathcal{P}}
\newcommand{\calI}{\mathcal{I}}
\newcommand{\bfu}{\mathbf{u}}
\newcommand{\bfc}{\mathbf{c}}
\newcommand{\bfT}{\mathbf{T}}
\newcommand{\bfG}{\mathbf{G}}
\begin{document}
	%
	\title{Coded Caching with Low Subpacketization Levels}

	\author{\IEEEauthorblockN{Li Tang and Aditya Ramamoorthy}
		\IEEEauthorblockA{Department of Electrical and Computer Engineering\\
			Iowa State University\\
			Ames, IA 50010\\
			Emails:\{litang, adityar\}@iastate.edu}}

	
	%


	\maketitle

\begin{abstract}

Caching is popular technique in content delivery networks that allows for reductions in transmission rates from the content-hosting server to the end users. Coded caching is a generalization of conventional caching that considers the possibility of coding in the caches and transmitting coded signals from the server. Prior results in this area demonstrate that huge reductions in transmission rates are possible and this makes coded caching an attractive option for the next generation of content-delivery networks. However, these results require that each file hosted in the server be partitioned into a large number (i.e., the subpacketization level) of non-overlapping subfiles. From a practical perspective, this is problematic as it means that prior schemes are only applicable when the size of the files is extremely large. In this work, we propose a novel coded caching scheme that enjoys a significantly lower subpacketization level than prior schemes, while only suffering a marginal increase in the transmission rate. In particular, for a fixed cache size, the scaling with the number of users is such that the increase in transmission rate is negligible, but the decrease in subpacketization level is exponential.


\end{abstract}
	

	%

	\section{Introduction}
Fast and efficient content delivery over the Internet is an important problem and is the core business of companies such as Akamai (which is estimated to serve 15-30\% of all Web traffic). Crucial to Akamai's approach is a large network of servers that cache popular content closer to the end users. This serves to significantly reduce delivery time and improve the end user's experience. Traditional caching operates by storing popular content (or portions thereof) closer to or at the end user. Typically a cache serves a user request partially (or sometimes entirely) with the remainder of the content coming from the server.
	

Prior work in this area \cite{Ma14} demonstrates that allowing coding in the cache and coded transmission from the server (referred to as {\it coded caching}) to the end users can allow for huge reductions in the number of bits transmitted from the server to the end users. This is an exciting development given the central role of caching in supporting a significant fraction of Web traffic.


Reference \cite{Ma14} considered a scenario where a single server containing $N$ files of size $F$ bits connects to $K$ users over a shared link and each user has a cache memory $MF$ bits.
Coded caching consists of two distinct phases: a $\emph{placement phase}$ and a $\emph{delivery phase}$. In the placement phase, the user caches are populated. This phase does not depend on the user demands which are assumed to be arbitrary. In delivery phase, server sends a $\emph{coded}$ signal to each user such that each user's demand is satisfied.

There have been subsequent papers in this area. Several papers \cite{ghasemi2015improved, HG15, sengupta2015improved}, have considered the problem of tighter lower bounds on the coded caching rate. Several variants of the problem have been examined. The case when files have different popularity levels has been examined in \cite{maddahN14nonuniform_demand,jiTLC14zipf,hachemKD14a}, device-to-device (D2D) wireless networks where there is no central server were considered in \cite{jiCM13,senguptabeyondd2d} and systems with differing file sizes were investigated in \cite{zhang2015coded}. Coded caching over a more general class of network topologies was examined in \cite{JJ15,tangR16}.

In this work we investigate certain issues with the achievability scheme of \cite{Ma14}. In particular, in the placement phase of \cite{Ma14} each file is split into a large number of subfiles (henceforth, the subpacketization level); the number of subfiles grows exponentially with $K$ for a fixed cache size. This can cause issues in actual implementations of coded caching.
Specifically, even for moderate number of users ($K$), the size of the files stored in the server need to be very large. Moreover, in practice each subfile needs to have appropriate header information that allows for bookkeeping at the server and the users. The rate overhead associated with the header will also grow as the number of subfiles is large. We discuss this issue in more detail in Section \ref{sec:prob_form}.

In this work, we propose new schemes for coded caching that have significantly smaller subpacketization level than the scheme of \cite{Ma14}. Our schemes are derived from constructions of combinatorial objects known as resolvable designs \cite{Stinson} in the literature. This issue was considered in the work of \cite{shanmugam_et_al14}, but for the case of decentralized caching. In independent work, \cite{yan_et_al16} arrived at a similar result to the one presented in our paper. However, the techniques used in our paper are quite different and our construction is significantly simpler than theirs.

\subsection{Main contributions}
\begin{itemize}
\item The subpacketization level of our scheme is exponentially lower than the scheme of \cite{Ma14}. This implies that our schemes are much more amenable to practical implementations even for smaller values of $K$
\item The transmission rate of our scheme is not too much higher than the scheme of \cite{Ma14}. In particular, for large $K$, both schemes have almost the same rate.
\end{itemize}

This paper is organized as follows, Section \ref{sec:prob_form} presents the problem formulation and preliminary definitions. In Section \ref{sec:low_subpack}, we
describe our proposed coded caching scheme and analyze its performance. We compare the performance of our scheme with competing schemes in Section \ref{sec:comp}.  Section \ref{sec:conclusion} concludes the paper with a discussion of future work.

\section{Problem Formulation and Preliminaries}
\label{sec:prob_form}
	
	In this work, we consider a caching system consisting of a single server and $K$ users, $U_1,\cdots, U_K$, such that the server is connected to all the users through an error-free shared link (see Fig. \ref{Fig:Cachingsystem}). The server contains a library of $N$ files where each file is of size $F$ bits. These files are represented by random variables $W_i$, $i=1,\cdots, N$, where $W_i$ is distributed uniformly over the set $[2^F]$ (we use $[n]$ to denote the set $\{1, 2, \dots, n\}$ throughout). Each user has a cache memory of $MF$ bits, where $M \leq N$.
	
	\begin{figure}[t]
		\centering
		\includegraphics[scale=0.64]{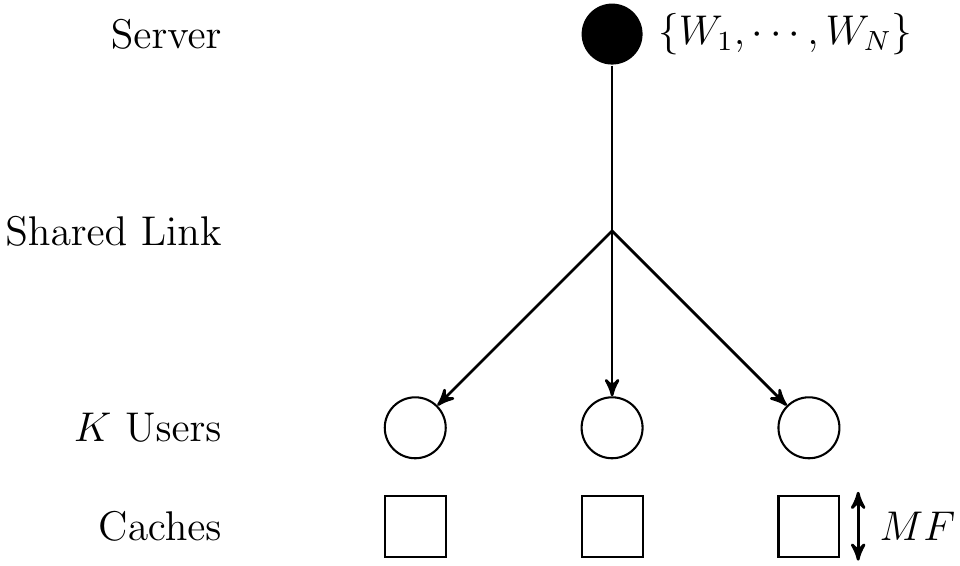}
		\caption{A coded caching system with $N$ files in the server, $K$ users each equipped with a cache of size $MF$ bits. The server communicates with the users over a shared link.}
		\label{Fig:Cachingsystem}
		\vspace{-0.2in}
	\end{figure}
	
An $(M,R)$ caching system can be defined as follows.	
    \begin{itemize}
    	\item
    	\emph{$K$ caching functions:} User $U_i$ caches $Z_i=\phi_i(W_1,\cdots,W_N)$ in the placement phase. Here, $\phi_i: [2^{F}]\to [2^{MF}]$.
    	\item
    	\emph{$N^K$ encoding functions:} The server transmits signals $X_{d_1,\cdots,d_K}=\psi_{d_1,\cdots,d_K}(W_1,\cdots,W_N)$ over the shared link to each user in the delivery phase. Here, $\psi_{d_1,\cdots,d_K}:[2^{NF}]\to[2^{RF}]$.
    	\item
    	\emph{$KN^K$ decoding functions:} User $U_i$ uses the decoding function $\hat{W}_{d_1,\cdots,d_K;i}=\mu_{d_1,\cdots,d_K;i}(X_{d_1,\cdots,d_K}, Z_i)$. Here, $\mu_{d_1,\cdots,d_K;i}(X_{d_1,\cdots,d_K}, Z_i):[2^{RF}]\times[2^{MF}]\to[2^{F}]$
    \end{itemize}
	
	The probability of error in a coded caching system is defined as $P_e=\max_{(d_1, \dots, d_K) \in [N]^K}\max_{i \in [K]} P(\hat{W}_{d_1,\cdots,d_K;i}\neq W_{d_i})$. The pair $(M,R)$ is said to be achievable if for every $\epsilon > 0$ and every large enough file size $F$, there exists a $(M,R)$ caching system such that $P_e$ is at most $\epsilon$.
	
The work of Maddah-Ali and Niesen \cite{Ma14} proposed an achievable $(M,R)$ caching scheme when $t = KM/N$ is an integer. Their scheme partitions each file into $F_s=\binom{K}{\frac{KM}{N}}$ non-overlapping subfiles of equal size and places judiciously chosen subsets of the subfiles into the user caches in the placement phase. This is referred to as uncoded placement in the literature. They achieve a rate of
	$$R=K\times\bigg(1-\frac{M}{N}\bigg)\times \frac{1}{1+\frac{KM}{N}},$$ when $K\le N$. 
It can be observed that for the scheme of \cite{Ma14}, the subpacketization level $F_s$ grows exponentially with $K$, when $M/N$ is fixed. This can be problematic in practical implementations.

For instance, the atomic unit of storage on present day hard drives is a sector of size $512$ bytes and the trend in the disk drive industry is to move this to 4096 bytes. Now, suppose that $K=50$, with $\frac{M}{N}=\frac{1}{2}$ so that $F_s\approx 10^{14}$. In this case, it is evident that one needs the files to be of at least size $\approx 5 \times 10^8$ gigabytes for leveraging the gains promised by the scheme of \cite{Ma14}. Thus, their scheme is not practical in this setting. Even for smaller values of $K$, schemes with low subpacketization levels are desirable. This is because any practical scheme will require each of the subfiles to have some header information that allows for decoding at the end users. When there are a large number of subfiles, the header overhead may be non-negligible.


In this work, we propose novel placement and delivery schemes that operate with significantly lower subpacketization levels. Towards this end, we first demonstrate that uncoded placement where each user has the same amount of cache memory can be represented by a block design \cite{Stinson}.

\begin{definition}
A design is a pair $(X, \calA)$ such that
\begin{enumerate}
\item  $X$ is a set of elements called points, and
\item $\calA$ is a collection (i.e., multiset) of nonempty subsets of $X$ called blocks, where each block contains the same number of points.
\end{enumerate}
\end{definition}

A design is in one-to-one correspondence with an incidence matrix $\calN$ which is defined as follows.
\begin{definition}
The incidence matrix $\calN$ of a design $(X, \calA)$ is a binary matrix of dimension $|X| \times |\calA|$, where the rows and columns correspond to the points and blocks respectively.
Let $i \in X$ and $j \in \calA$. Then,
\begin{align*}
\calN (i,j) = \begin{cases}
1 & \text{~if $i \in j$,}\\
0 & \text{~otherwise}.
\end{cases}
\end{align*}
\end{definition}
In general, we can define the placement schemes by using the incidence matrix. One can view the placement scheme of \cite{Ma14} when $KM/N = t$ is an integer as an instance of a block design as follows. We associate the users with the points, i.e., $X = [K]$ and the subfiles as the blocks, i.e., $\calA = \{B: B \subset [K], |B| = t\}$. Each file $W_n$ is divided into $\binom{K}{t}$ parts indexed as $W_{n,B}, B \in \calA$. User $i$ caches $W_{n,B}$ for $B \in \calA$ if $i \in B$ or equivalently if the corresponding entry in the incidence matrix is a one. In general, we can reverse the roles of the points and blocks and choose to associate the users with the blocks and subfiles with the points instead. The transpose of the incidence matrix then allows us to specify the placement.

In this work, we will utilize resolvable designs which are a special class of block designs.
\begin{definition}
\label{defn:resolv_design}
A parallel class $\calP$ in a design $(X,\calA)$ is a subset of disjoint blocks from $\calA$ whose union is $X$. A partition of $\calA$ into several parallel classes is called a resolution, and $(X,\calA)$ is said to be a resolvable design if $\calA$ has at least one resolution.
\end{definition}

We now provide an example of how a design can be used in the placement scheme, when the users and subfiles correspond to the blocks and points, respectively.	
\begin{example}
Consider a block design specified as follows.
\begin{align*}
X&=\{1,2,3\}, & \calA &=\{\{1,2\},\{1,3\},\{2,3\}\}, \text{~with}\\
 \calN &= \begin{bmatrix}
1 & 1 & 0\\
1 & 0 & 1\\
0 & 1 & 1
\end{bmatrix}.
\end{align*}
As described below, it corresponds to a coded caching scheme with $K=3$ and $M/N=2/3$.

We let the blocks correspond to users which are denoted as $U_{12}, U_{13}, U_{23}$. Each file is subdivided into $|X|=3$ subfiles denoted as $W_{n,1}, W_{n,2}, W_{n,3}$ for $n \in [N]$. The placement is specified as follows.
\begin{align*}
Z_{12}&=\{W_{n,1},W_{n,2}\}_{n=1}^{N}\\
Z_{13}&=\{W_{n,1},W_{n,3}\}_{n=1}^{N}\\
Z_{23}&=\{W_{n,2},W_{n,3}\}_{n=1}^{N}.
\end{align*}
In the delivery phase, suppose that $U_{12}, U_{13}, U_{23}$ request files $W_{d_{12}}, W_{d_{13}}, W_{d_{23}}$. Using the delivery signal
 \begin{align*}
 W_{d_{12},3}\oplus  W_{d_{13},2}\oplus W_{d_{23},1}
 \end{align*}
all three users can recover their missing subfiles.
	\end{example}
	
\section{A Low Subpacketization Level Scheme}
\label{sec:low_subpack}
Consider a coded caching scenario where the number of users $K$ can be factored as $K=q \times k$ (this requires $K$ to be composite). In this section we use resolvable designs to arrive at a scheme where the subpacketization level is significantly smaller than prior schemes. 

\subsection{Resolvable Design Construction}
\label{sec:resolv_design_constr}

Let $\mathbb Z_q$ denote the additive group of integers modulo $q$. Consider the generator matrix of a $(k,k-1)$ single parity check (SPC) code over $\mathbb Z_q$ defined below.
\begin{equation}
\bfG_{SPC}=
\begin{bmatrix}
& &\vline&1\\
&\huge \bf I_{k-1}&\vline&\vdots\\
&&\vline&1
\end{bmatrix}.
\label{equ:SPC}
\end{equation}
This code has $q^{k-1}$ codewords which can be obtained by computing $\bfc = \bfu \cdot \bfG_{SPC}$ for all possible message vectors $\bfu$.
We collect the $q^{k-1}$ codewords $\bfc_i$ and construct a matrix $\bfT$ of size $k\times q^{k-1}$ specified as follows.
\begin{equation}
\bfT=[{\bfc}_1^T,{\bfc}_2^T,\cdots,{\bfc}_{q^{k-1}}^T].
\label{equ:T}
\end{equation}
Let $X_{SPC} = [q^{k-1}]$ represent the point set of the design.
We define the blocks as follows. For $0 \leq l \leq q-1$, let $B_{i,l}$ be a block defined as
$$
B_{i,l}=\{j: \mathbf T_{i,j}=l\}.
$$
The set of blocks $\calA_{SPC}$ is given by the collection of all $B_{i,l}$ for $1 \leq i \leq k$ and $0 \leq l \leq q-1$ so that $|\calA_{SPC}| = kq$.

\begin{example}
\label{eg:resolv_q_2_k_3}
	Let $q=2$, $k=3$. Consider a $(3,2)$ SPC code over $\mathbb Z_2$ with generator matrix
	$$
   \mathbf G_{SPC}=\begin{bmatrix}
	1&0&1\\
	0&1&1
	\end{bmatrix}.
	$$
    The four codewords in this code are
	$\bfc_1=[0 ~0 ~0]$,
	$\bfc_2=[0 ~1 ~1]$,
	$\bfc_3=[1 ~0 ~1]$,
	$\bfc_4=[1 ~1 ~0]$, and $\bfT$ is constructed as follows.
	$$
	\bfT=
	\begin{bmatrix}
	0&0&1&1\\
	0&1&0&1\\
	0&1&1&0
	\end{bmatrix}.
	$$
Using $\bfT$, we generate the resolvable block design $(X,\calA)$ as follows.
The point set $X=\{1,2,3,4\}$. Block $B_{1,0}$ is obtained by determining the column indexes where the first row of $\bfT$ is zero. Thus $B_{1,0} = \{1,2\}$.
Proceeding in this manner we obtain
$$\calA=\{\{1,2\},\{3,4\},\{1,3\},\{2,4\},\{1,4\},\{2,3\}\}.$$
It can be observed that $\calA$ has a resolution ({\it cf.} Definition \ref{defn:resolv_design}) with the following parallel classes.
\begin{align*}
\calP_1 &=\{\{1,2\},\{3,4\}\},\\
\calP_2 &=\{\{1,3\},\{2,4\}\}, \text{~and}\\
\calP_3 &=\{\{1,4\},\{2,3\}\}.	
\end{align*}
\end{example}

The following lemma shows that the construction procedure above always results in a resolvable design.
\begin{lemma}
The construction procedure above produces a design $(X_{SPC}, \calA_{SPC})$ where $X_{SPC} = [q^{k-1}]$, $|B_{i,l}| = q^{k-2}$ for all $1 \leq i \leq k$ and $0 \leq l \leq q-1$. Furthermore, the design is resolvable with parallel classes given by $\calP_i = \{B_{i,l}: 0 \leq l \leq q-1\}$, for $1\leq i \leq k$.
\end{lemma}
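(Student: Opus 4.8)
The plan is to verify the three assertions in turn: that each block has the stated size, that each $\calP_i$ is a parallel class, and that these classes constitute a resolution. The observation driving everything is that a codeword $\bfc = \bfu \cdot \bfG_{SPC}$ has the explicit form $\bfc = (u_1, u_2, \dots, u_{k-1}, \sum_{j=1}^{k-1} u_j)$, with the final sum taken modulo $q$; thus the first $k-1$ entries of a codeword range freely and the last entry is their parity. Reading off the entry $\bfT_{i,j}$ is then simply reading the $i$-th symbol of the $j$-th codeword, and the blocks $B_{i,l}$ are the codeword indices whose $i$-th symbol equals $l$.

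First I would count $|B_{i,l}|$ by splitting into two cases according to whether $i$ is a systematic or a parity coordinate. For a systematic coordinate $1 \le i \le k-1$, the constraint $\bfT_{i,j} = l$ forces the message symbol $u_i = l$, while the remaining $k-2$ message symbols range freely over $\mathbb{Z}_q$, yielding exactly $q^{k-2}$ codewords. For the parity coordinate $i = k$, the constraint reads $\sum_{j=1}^{k-1} u_j = l \pmod q$; here $u_1, \dots, u_{k-2}$ may be chosen arbitrarily and then $u_{k-1}$ is uniquely determined, again giving $q^{k-2}$ codewords. Hence $|B_{i,l}| = q^{k-2}$ for every $i$ and $l$, which simultaneously confirms that $(X_{SPC}, \calA_{SPC})$ is a genuine design, since all blocks share this common cardinality.

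Next I would establish resolvability. For a fixed coordinate $i$, every codeword index $j \in [q^{k-1}]$ has its $i$-th symbol $\bfT_{i,j}$ equal to exactly one value $l \in \{0, \dots, q-1\}$, so $j$ lies in precisely one of the blocks $B_{i,0}, \dots, B_{i,q-1}$. Consequently these $q$ blocks are pairwise disjoint and their union is all of $X_{SPC}$, which is exactly the statement that $\calP_i = \{B_{i,l} : 0 \le l \le q-1\}$ is a parallel class; as a consistency check, $q \cdot q^{k-2} = q^{k-1} = |X_{SPC}|$. Finally, since each block $B_{i,l}$ carries a single coordinate index $i$, the classes $\calP_1, \dots, \calP_k$ partition the whole collection $\calA_{SPC}$, so $\{\calP_i\}_{i=1}^{k}$ is a resolution and the design is resolvable.

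I do not anticipate a serious obstacle, as the argument is essentially a change of viewpoint that reads the blocks directly off the coordinates of the codewords. The one place demanding a moment of care is the parity coordinate $i = k$, where the count relies on the parity map $\mathbb{Z}_q^{k-1} \to \mathbb{Z}_q$ having uniform fibers of size $q^{k-2}$. This is immediate over the group $\mathbb{Z}_q$ because the coefficient of $u_{k-1}$ is the unit $1$, so fixing the linear combination still leaves $q^{k-2}$ free choices and no additional hypothesis on $q$ (such as primality) is required.
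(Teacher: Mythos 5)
Your proposal is correct and follows essentially the same route as the paper: the same explicit description of the codewords, the same case split between the systematic coordinates $1 \le i \le k-1$ and the parity coordinate $i=k$, and the same counting argument showing each fiber has size $q^{k-2}$. Your treatment of resolvability is slightly more explicit than the paper's (which leaves the disjointness of the $B_{i,l}$ within a class implicit), but there is no substantive difference.
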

\begin{proof}
For a given $i$, we need to show that $|B_{i,l}| = q^{k-2}$ for all $0 \leq l \leq q-1$ and that $\cup_{l=0}^{q-1} B_{i,l} = [q^{k-1}]$. Towards this end we note that for $\Delta = [\Delta_1 ~\Delta_2~ \dots~\Delta_k]= \bfu \bfG_{SPC}$, we have
\begin{align*}
\Delta_i = \begin{cases}
\bfu_i & i = 1, \dots, k-1,\\
\sum_{j=1}^{k-1} \bfu_j & i=k.
\end{cases}
\end{align*}
Thus, for $1 \leq i \leq k-1$, we have that $|B_{i,l}| = |\{\bfu : \bfu_i = l\}|$ which in turns equals $q^{k-2}$ as it is the subset of all message vectors with the $i$-th coordinate equal to $l$. Moreover, as the $i$-th coordinate has to belong to $\{0, \dots, q-1\}$, we have that $\calP_i = \{B_{i,l}: 0 \leq l \leq q-1\}$ forms a parallel class.

It remains to show the same result when $i = k$. For this consider the equation
\begin{align*}
\sum_{j=1}^{k-2} \bfu_j = l - \bfu_{k-1}
\end{align*}
where $l$ is fixed.
For arbitrary $\bfu_j, 1 \leq j \leq k-2$, this equation has a unique solution for $\bfu_{k-1}$.
This implies that for any $l$, $|B_{k,l}| = q^{k-2}$ and that $\calP_{k}$ forms a parallel class.
\end{proof}
\begin{remark}
If $q$ is a prime power then constructions of affine resolvable balanced incomplete block designs (BIBDs) with significantly more parallel classes are known (see \cite{Stinson}, Ch. 5). However, our proposed scheme above works for any value of $q$ and is adapted for the application to coded caching that we consider.
\end{remark}
\subsection{Usage in a coded caching scenario}
We first demonstrate our proposed placement scheme by using Example \ref{eg:placement_q_2_k_3} below. We associate the users with the blocks and subfiles with the points of the design.
\begin{example}
\label{eg:placement_q_2_k_3}
Consider the resolvable design from Example \ref{eg:resolv_q_2_k_3}. The six blocks in $\calA$ correspond to six users $U_{12}$, $U_{34}$, $U_{13}$, $U_{24}$, $U_{14}$, $U_{23}$. Each file is partitioned into $F_s=4$ subfiles $W_{n,1}, W_{n,2}, W_{n,3}, W_{n,4}$ which correspond to the four points in $X$. The cache in user $U_{B}$, denoted $Z_{B}$ is specified as
	\begin{align*}
	Z_{12}=&(W_{n,1}, W_{n,2})_{n=1}^{N}\\
	Z_{34}=&(W_{n,3}, W_{n,4})_{n=1}^{N}\\
	Z_{13}=&(W_{n,1}, W_{n,3})_{n=1}^{N}\\
	Z_{24}=&(W_{n,2}, W_{n,4})_{n=1}^{N}\\
	Z_{14}=&(W_{n,1}, W_{n,4})_{n=1}^{N}\\
	Z_{23}=&(W_{n,2}, W_{n,3})_{n=1}^{N}
	\end{align*}
This corresponds to a coded caching system where each user caches half of each file so that $M/N = 1/2$.
\end{example}
Suppose that in the delivery phase user $U_B$ requests file $W_{d_{B}}$ where $d_B \in [N]$. These demands can be satisfied as follows.
\begin{example}
	Consider the placement scheme specified in Example \ref{eg:placement_q_2_k_3}.  For a set of requests $W_{d_{12}},W_{d_{34}},W_{d_{13}},W_{d_{24}},W_{d_{14}}$ and $W_{d_{23}}$, we pick three blocks from three different parallel classes $\calP_1$, $\calP_2$, $\calP_3$ and generate the signals transmitted in the delivery phase as follows.
	\begin{align*}
	&W_{d_{12},3}\oplus W_{d_{13},2}\oplus W_{d_{23},1},\\
    &W_{d_{12},4}\oplus W_{d_{24},1}\oplus W_{d_{14},2},\\
    &W_{d_{34},1}\oplus W_{d_{13},4}\oplus W_{d_{14},3}, \text{~and}\\
    &W_{d_{34},2}\oplus W_{d_{24},3}\oplus W_{d_{23},4}.
	\end{align*}
The three sums in the first signal correspond to blocks from different parallel classes $\{1,2\}\in \calP_1, \{1,3\}\in \calP_2 ,\{2,3\}\in \calP_3$. It can be observed that this equation benefits each of the three users participating in it. Furthermore, it is also apparent that at the end of the delivery phase, each user obtains its missing subfiles. This scheme corresponds to a subpacketization level of $4$ and a rate of $1$. In contrast, the scheme of \cite{Ma14} would require a subpacketization level of $\binom{6}{3} = 20$ with a rate of $0.75$.
\end{example}

Upon inspection, it can be observed that the proposed scheme works since it allows us to always generate an equation where one user from each parallel class can participate. Crucially, the equations can be chosen so that at the end of the transmission each user is satisfied.


The basic idea conveyed by the example above can be generalized as follows. For a coded caching scheme with $K = kq$ and $M/N = 1/q$, suppose that we generate the resolvable design $(X, \calA)$ by the procedure outlined in Section \ref{sec:resolv_design_constr}. Let each block in $\calA$ correspond to a user and each point in $X$ correspond to a subfile. We split each file $W_n, n \in [N]$ into $q^{k-1}$ subfiles, so that $W_n=\{W_{n,t}:t\in [q^{k-1}]\}$ and perform the cache placement by using the incidence matrix of the design. Thus, subfile $W_{n,t}$ is placed in the cache of user $U_B$ if $t\in B$ and hence each user caches a total of $Nq^{k-2}$ subfiles. Since each of subfiles has size $\frac{F}{q^{k-1}}$, this requires
\begin{align*}
Nq^{k-2}\frac{F}{q^{k-1}}&=F\frac{N}{q}=FM
\end{align*}
bits of cache memory at each user. Thus, the memory constraint is satisfied.

It remains to show that we can design a delivery phase scheme that satisfies any possible demand pattern. Towards this end we need the following claim, whose proof is deferred to the Appendix.
\begin{claim}
\label{claim:intersect}
Consider a resolvable design $(X, \calA)$ constructed by the procedure in Section \ref{sec:resolv_design_constr} for given $k$ and $q$. Let the parallel classes of the design be denoted $\calP_1, \dots, \calP_k$. Consider blocks $B_{i_1, l_1}, \dots, B_{i_{k-1}, l_{k-1}}$ (where $i_j \in [k], l_j \in \{0, \dots, q-1\}$) that are picked from $k-1$ distinct parallel classes $\calP_{i_1}, \dots, \calP_{i_{k-1}}$. Then, $|\cap_{j=1}^{k-1} B_{i_j, l_j}| = 1$.
\end{claim}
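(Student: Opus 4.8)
The plan is to translate the intersection condition into a system of linear equations over $\mathbb{Z}_q$ and show it has a unique solution. Recall that each point $j \in [q^{k-1}]$ indexes the codeword $\bfc_j$ sitting in the $j$-th column of $\bfT$, and that $j \in B_{i,l}$ precisely when $\bfT_{i,j} = l$, i.e., when the $i$-th coordinate of $\bfc_j$ equals $l$. Since the codewords are exactly the vectors $\bfu \bfG_{SPC}$ as $\bfu$ ranges over $\mathbb{Z}_q^{k-1}$, a point lies in $\cap_{j=1}^{k-1} B_{i_j,l_j}$ iff its message vector $\bfu$ satisfies $(\bfu \bfG_{SPC})_{i_j} = l_j$ for all $j$. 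Because the correspondence between points and message vectors is a bijection, it suffices to count the number of $\bfu \in \mathbb{Z}_q^{k-1}$ meeting these $k-1$ constraints.

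First I would write these constraints in matrix form. Let $S = \{i_1, \dots, i_{k-1}\}$ be the distinct coordinate indices, and let $\bfG_S$ denote the $(k-1)\times(k-1)$ submatrix of $\bfG_{SPC}$ consisting of the columns indexed by $S$. The constraints then read $\bfu\, \bfG_S = (l_1, \dots, l_{k-1})$, so the claim reduces to showing that $\bfG_S$ is invertible over $\mathbb{Z}_q$ for every size-$(k-1)$ subset $S$ of $[k]$; in that case the system has exactly one solution $\bfu$ and the intersection is a single point.

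Then I would verify invertibility by a short case analysis driven by the structure $\bfG_{SPC} = [\,\mathbf{I}_{k-1} \mid \mathbf{1}\,]$, where $\mathbf{1}$ is the all-ones parity column. If $S = [k-1]$, then $\bfG_S = \mathbf{I}_{k-1}$, which is trivially invertible. Otherwise $S = [k]\setminus\{m\}$ for some $m \in [k-1]$, so $\bfG_S$ is built from the standard basis columns $e_i$ with $i \neq m$ together with $\mathbf{1}$. Expanding the determinant along row $m$, which contains a single nonzero entry (the $1$ in the $\mathbf{1}$ column), leaves the identity matrix on the index set $[k-1]\setminus\{m\}$, giving $\det \bfG_S = \pm 1$.

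The point I expect to require the most care is that $\mathbb{Z}_q$ is only a ring, not a field, when $q$ is composite, so nonsingularity over a field does not by itself yield a unique solution. This is exactly why computing the determinant to be $\pm 1$ (a unit in $\mathbb{Z}_q$ for every $q$), rather than merely nonzero, is essential: a unit determinant guarantees that $\bfG_S$ is invertible over $\mathbb{Z}_q$ and hence that $\bfu\, \bfG_S = (l_1,\dots,l_{k-1})$ has a unique solution. Structurally this reflects the fact that the SPC code is an MDS code of minimum distance $2$, so any $k-1$ of its $k$ coordinates form an information set. Combining the two cases yields $|\cap_{j=1}^{k-1} B_{i_j,l_j}| = 1$, as claimed.
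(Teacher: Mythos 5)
Your proof is correct and follows essentially the same route as the paper's: both reduce membership in the intersection to a system of $k-1$ linear constraints on the message vector $\bfu \in \mathbb{Z}_q^{k-1}$ and show that it has a unique solution, the paper by direct substitution (fixing $k-2$ coordinates and solving the parity equation for the remaining one) and you by checking that the corresponding $(k-1)\times(k-1)$ submatrix of $\bfG_{SPC}$ has determinant $\pm 1$. Your explicit observation that a \emph{unit} determinant is needed because $\mathbb{Z}_q$ is only a ring for composite $q$ is a point the paper leaves implicit, but it does not change the substance of the argument.
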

Note that the blocks are in one to one correspondence with the users. Thus, Claim \ref{claim:intersect} shows that any $k-1$ users picked from different parallel classes have one subfile in common.

Roughly speaking, this implies that if we pick $k$ users, one from each parallel class, then we can generate an equation that is simultaneously useful to each of them. A subsequent counting argument shows that each user can be satisfied.

We now formalize this argument. Let the request of user $U_B, B \in \calA$ be denoted by $W_{d_{B}}$.\\
{\bf Delivery Phase algorithm}
\begin{enumerate}
\item Pick users $U_{B_{1, l_1}}, \dots, U_{B_{k, l_k}}$  where $l_i \in \{0, \dots, q-1\}, i \in [k]$ and $B_{i, l_i} \in \calP_i$, such that $\cap_{i=1}^k B_{i,l_i} = \phi$.
\item Let $\hat{l}_{\alpha} = \cap_{i \in [k] \setminus \{\alpha\} } B_{i, l_i}$ for $\alpha = 1, \dots, k$.
\item Server transmits
\begin{align*}
\oplus_{\alpha \in [k]} W_{d_{B_{\alpha,l_\alpha}},\hat{l}_\alpha},
\end{align*}
\item Repeat Step 1, until all users are satisfied.
\end{enumerate}

\begin{claim}
\label{claim:delivery_phase_correctness}
The delivery phase algorithm proposed above terminates and allows each user's demand to be satisfied. Furthermore the rate of transmission of the server is $R = q-1$.
\end{claim}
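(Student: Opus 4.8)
The plan is to prove the two assertions of Claim~\ref{claim:delivery_phase_correctness} in sequence: first that each transmitted signal is decodable and benefits all $k$ participating users, then that the algorithm terminates having served everyone, and finally to count the transmissions to obtain $R = q-1$.

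\medskip
\textbf{Step 1 (Each signal is useful and decodable).} Fix an iteration of Step~1 that picks users $U_{B_{1,l_1}}, \dots, U_{B_{k,l_k}}$, one from each parallel class, with $\cap_{i=1}^k B_{i,l_i} = \phi$. For each $\alpha$, I would observe that $\hat{l}_\alpha = \cap_{i \in [k]\setminus\{\alpha\}} B_{i,l_i}$ is an intersection of $k-1$ blocks from $k-1$ distinct parallel classes, so by Claim~\ref{claim:intersect} it is a \emph{single} point of $X$, i.e. a well-defined subfile index. The key decodability facts are: (i) subfile $W_{d_{B_{\alpha,l_\alpha}},\hat{l}_\alpha}$ is \emph{missing} from user $U_{B_{\alpha,l_\alpha}}$, because $\hat{l}_\alpha \notin B_{\alpha,l_\alpha}$ (if it were, then $\hat{l}_\alpha$ would lie in all $k$ blocks, contradicting $\cap_{i=1}^k B_{i,l_i} = \phi$); and (ii) every \emph{other} term $W_{d_{B_{\beta,l_\beta}},\hat{l}_\beta}$ with $\beta \neq \alpha$ is \emph{cached} by $U_{B_{\alpha,l_\alpha}}$, since $\hat{l}_\beta \in B_{\alpha,l_\alpha}$ (as $\alpha \in [k]\setminus\{\beta\}$, the point $\hat{l}_\beta$ lies in $B_{\alpha,l_\alpha}$ by its definition). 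Hence user $U_{B_{\alpha,l_\alpha}}$ can XOR away all the cached terms and recover its one desired subfile, and this holds simultaneously for all $\alpha \in [k]$.

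\medskip
\textbf{Step 2 (Termination and full coverage by a counting argument).} Here the goal is to show that, by repeating Step~1, every one of the $q^{k-1}$ subfiles of every user's requested file is eventually delivered. I would argue by a clean accounting of ``(user, missing subfile)'' pairs. Each user $U_B$ caches $q^{k-2}$ subfiles and is missing $q^{k-1}-q^{k-2}$, and there are $kq$ users, giving a total demand of $kq(q^{k-1}-q^{k-2})$ missing subfiles. Each transmitted signal serves exactly $k$ distinct missing subfiles (one per participating user, by Step~1). The crux is to exhibit a scheduling of Step~1 that partitions \emph{all} missing subfiles into such $k$-tuples without repetition; I would parametrize the choice of participating blocks by the offsets $(l_1,\dots,l_k)\in\{0,\dots,q-1\}^k$ subject to $\sum_i l_i \neq 0 \pmod q$ (this is precisely the condition $\cap_i B_{i,l_i}=\phi$, which I would verify using the single-parity-check structure of $\bfG_{SPC}$ exactly as in the proof of Claim~\ref{claim:intersect}), and show that as these offsets range over all valid tuples, every (user, missing subfile) pair is covered exactly once. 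This guarantees termination and complete satisfaction.

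\medskip
\textbf{Step 3 (Rate computation) and the main obstacle.} The number of transmissions equals the number of valid offset tuples, and each transmission has size $F/q^{k-1}$, so $R$ is (number of signals)$\,\times q^{-(k-1)}$. Equating the total served subfiles $k \times (\#\text{signals})$ with the total demand $kq(q^{k-1}-q^{k-2})$ from Step~2 yields $\#\text{signals} = q^{k-1}(q-1)$, whence $R = q^{k-1}(q-1)/q^{k-1} = q-1$. \emph{The main obstacle} is Step~2: Step~1 and Step~3 are essentially immediate consequences of Claim~\ref{claim:intersect} and simple counting, but establishing that the repeated invocations of Step~1 can be scheduled so that the emitted $k$-tuples tile the set of all missing subfiles \emph{without overlap and without omission} is the delicate part. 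I expect the cleanest route is to make the offset parametrization explicit and prove a bijection between valid offset tuples (together with a choice of served coordinate) and the set of (user, missing subfile) pairs, leaning again on the parity-check equation $\Delta_k = \sum_{j<k}\bfu_j$ to pin down which single point each $(k-1)$-fold intersection produces.
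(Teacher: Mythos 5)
Your proposal is correct and follows essentially the same route as the paper: Claim~\ref{claim:intersect} makes each $\hat{l}_\alpha$ a well-defined missing subfile for user $\alpha$ and a cached one for everyone else, a per-user count shows the $q^{k-2}(q-1)$ equations a fixed user participates in match its $q^{k-1}-q^{k-2}$ missing subfiles, and the $q^{k-1}(q-1)$ valid offset tuples give $R=q-1$. The ``main obstacle'' you flag (no overlap in the tiling) is dispatched in the paper by a one-line observation --- if two distinct block-tuples serving the same user produced the same intersection point, that point would lie in two distinct blocks of a common parallel class, which are disjoint, so the map from equations to recovered subfiles is injective and the cardinality count finishes the argument; also note that for $q>2$ your empty-intersection condition should read $\sum_{j=1}^{k-1} l_j - l_k \neq 0 \pmod q$ rather than $\sum_j l_j \neq 0 \pmod q$, given the paper's generator matrix $[\,\mathbf{I}_{k-1} \mid \mathbf{1}\,]$.
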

The proof of the above claim appears in the Appendix.

\section{Comparison with Existing Schemes}
\label{sec:comp}
In this section, we compare our proposed scheme with the scheme of \cite{Ma14} with $K=qk$ and $\frac{M}{N}=\frac{1}{q}$.
Let $R^*$ and $F^*$ denote the rate and the subpacketization level of our proposed scheme, $R^{MN}$ and $F^{MN}$ denote the rate and the subpacketization level of \cite{Ma14}, where
\begin{align*}
R^{MN}&=\frac{K(1-\frac{M}{N})}{1+\frac{KM}{N}}\\
&=\frac{qk-k}{1+k},\text{~and}\\
F^{MN}&=\binom{K}{\frac{KM}{N}}\\
&=\binom{qk}{k}.
\end{align*}

In our proposed scheme,
\begin{align*}
R^*&=q-1,\text{~and}\\
F^*&=q^{k-1}.
\end{align*}

Thus, the following conclusions can be drawn.
\begin{align*}
\frac{R^{MN}}{R^*}&=\frac{k}{1+k}.
\end{align*}
This implies that the rate of our proposed scheme and the scheme of \cite{Ma14} is almost the same for large $k$.

For large $k$, we show (see Appendix, Lemma \ref{lemma:subpack_comp_1}) that
\begin{align*}
	\frac{F^{MN}}{F^*}\approx (\frac{q}{q-1})^{qk-k}.
\end{align*}
This implies that our subpacketization is {\it exponentially} smaller compared to the scheme of \cite{Ma14}.
Table \ref{Table:Compare} shows a precise numerical comparison when $q=2$, i.e., $M/N = 1/2$.
\begin{table}[t]
	\centering
	\label{Table:Compare}
	
	\begin{tabular}{cccccccc}
		\hline
		\hline
		$K$ & 2     & 4     & 8    & 10     & 12      & 14     &16\\
		\hline
		\hline
		$R^{MN}$  & 0.67  & 0.75  &0.8   & 0.83   & 0.86    & 0.875  &0.89\\
		\hline
		$R^*$      &1      &1      &1     &1       &1        &1       &1\\
		\hline
		$F^{MN}$  &6      &20     &70    &252     &924      &3432    &12870\\
		\hline
		$F^*$      &2      &4      &8     &16      &32       &64      &128\\
		\hline
		\hline
	\end{tabular}
	
	\caption{Numerical comparison for $\frac{M}{N}=\frac{1}{2}$ and different values of $K$}.
\end{table}

An alternate technique for achieving $M/N = 1/q$ with a lower subpacketization level is to perform memory-sharing between
appropriate points using the scheme of \cite{Ma14}. Next, we compare our proposed scheme with memory-sharing for the case of $q=2$, i.e., the number of users $K = 2k$.


Towards this end, we divide each file into two smaller files $W_{n}^1, W_{n}^2$ with equal size, and further split $W_{n}^1, W_{n}^2$ into $\binom{2k}{t}$ and  $\binom{2k}{2k-t}$ subfiles, respectively, where $t< k$. In the placement phase, $\frac{t}{2k}$ fraction of the subfiles of $W_{n}^1$ and $\frac{2k-t}{2k}$ fraction of the subfiles of $W_{n}^2$ are placed in each user using the placement scheme of \cite{Ma14}. Thus, the overall cache at each user is $M = N \cdot \frac{1}{2} (\frac{t}{2k} + \frac{2k-t}{2k})$ so that $M/N = 1/2$. In the delivery phase, the transmission rate is given by
$$
R^{MN,MS}=\frac{1}{2}\bigg{(}\frac{2k-t}{1+t}+\frac{t}{1+2k-t}\bigg{)}.
$$
The subpacketization level of this scheme $F_s^{M-D,MS}=2\binom{2k}{t}$.
We compare our proposed scheme with the memory sharing scheme considered above by choosing a value of $t$ so that the rates of the schemes are approximately the same. Since $\frac{2k-t}{1+t}>1$ and $\frac{t}{1+2k-t}<1$, we approximate $R^{MN,MS}\approx\frac{2k-t}{2(1+t)}$. Then $R^{MN,MS}=R^*$ if $t=\frac{2k-2}{3}$. For this setting we show (see Appendix, Lemma \ref{lemma:subpack_comp_2}) that
\begin{align*}
\frac{F^{MN,MS}}{F^*}\approx 2^{2.8k}.
\end{align*}
Thus, our scheme has a significantly lower subpacketization level compared to a memory-sharing solution.
\section{Conclusions and Future Work}
\label{sec:conclusion}
In this work, we proposed a novel scheme for coded caching whose subpacketization level is exponentially smaller than the scheme in \cite{Ma14}. Moreover, for large number of users, the rate of our scheme is almost the same as \cite{Ma14}. Our schemes are derived from resolvable block designs generated by a single parity-check code over $\mathbb Z_q$.

There are several opportunities for future work. Our proposed scheme currently only works when $M/N$ is the reciprocal of a positive integer. Furthermore, even though our subpacketization level is significantly lower than \cite{Ma14}, it still scales exponentially with the number of users, albeit much slowly. Investigating schemes with subpacketization levels that grow sub-exponentially with $K$ and schemes that work with general values of $M/N$ are interesting directions for future work.

\bibliographystyle{IEEETran}
\bibliography{coded_caching,caching_refs,refs}
\appendix

\subsection*{Proof of Claim \ref{claim:intersect}}
Following the construction in Section \ref{sec:resolv_design_constr}, we note that a block $B_{i,l} \in \calP_i$ is specified by
$$
B_{i,l} = \{j : \bfT_{i,j} = l\}
$$
Now, consider $B_{i_1, l_1}, \dots, B_{i_{k-1}, l_{k-1}}$ (where $i_j \in [k], l_j \in \{0, \dots, q-1\}$) that are picked from $k-1$ distinct parallel classes $\calP_{i_1}, \dots, \calP_{i_{k-1}}$. W.l.o.g. we assume that $i_1 < i_2 < \dots < i_{k-1}$. Let $\calI =  \{i_1, \dots, i_{k-1}\}$ and $\bfT_{\calI}$ denote the submatrix of $\bfT$ obtained by retaining the rows in $\calI$. We will show that the vector $[l_1~l_2~\dots~l_{k-1}]^T$ is a column in $\bfT_{\calI}$.

To see this first we consider the case that $\calI = \{1, \dots, k-1\}$. In this case, the message vector $\bfu = [l_1~l_2~\dots~l_{k-1}]$ is such that $[\bfu \bfG_{SPC}]_{\calI} = [l_1~l_2~\dots~l_{k-1}]^T$ so that $[l_1~l_2~\dots~l_{k-1}]^T$ is a column in $\bfT_{\calI}$. On the other hand if $k \in \calI$, then we have $i_{k-1} = k$. Now, consider the system of equations in variables $u_1, \dots, u_{k-1}$.
\begin{align*}
u_{i_1} &= l_1,\\
u_{i_2} &= l_2,\\
\mathrel{\makebox[\widthof{=}]{\vdots}}\\
u_{i_{k-2}} &= l_{k-2},\\
u_1 + u_2 + \dots + u_{k-1} &= l_{k-1}.
\end{align*}
It is evident that this system of $k-1$ equations in $k-1$ variables has a unique solution over $\mathbb Z_q$. The result follows. \endproof

\subsection*{Proof of Claim \ref{claim:delivery_phase_correctness}}

In the arguments below, for the sake of convenience we argue that user $U_{B_{1,0}}$ can recover all its missing subfiles. As the delivery phase algorithm is symmetric with respect to the users, this equivalently shows that all users can recover their missing subfiles.

Note that $|B_{1,0}| = q^{k-2}$. Thus, user $U_{B_{1,0}}$ needs to obtain $q^{k-1} - q^{k-2}$ missing subfiles. The delivery phase scheme repeatedly picks $k$ users from different parallel classes $U_{B_{1, 0}}, U_{B_{2, l_2}}\dots, U_{B_{k, l_k}}$  such that $B_{1,0} \bigcap \cap_{i=2}^k B_{i,l_i} = \phi$. According to the equation transmitted in Step 3 of the algorithm, this allows $U_{B_{1, 0}}$ to recover subfile $W_{d_{B_{1,0}}, \hat{l}_1}$ where $\hat{l}_1 = \cap_{i=2}^{k} B_{i,l_i}$. Note that $U_{B_{1,0}}$ does not have $W_{d_{B_{1,0}}, \hat{l}_1}$ in its caches since $B_{1,0} \bigcap \cap_{i=2}^k B_{i,l_i} = \phi$.

Next, we count the number of equations that $U_{B_{1,0}}$ participates in. We can pick $k-2$ users from parallel classes $\calP_2, \dots, \calP_{k-1}$. Claim \ref{claim:intersect} ensures that blocks corresponding to these users intersect in a single point. Next we pick a block from the remaining parallel class $\calP_k$ such that the intersection of all the blocks is empty; this can be done in $q-1$ ways. Thus, there are a total of $q^{k-2}(q-1) = q^{k-1} - q^{k-2}$ equations in which user $U_{B_{1,0}}$ participates.

We have previously argued that each such equation allows $U_{B_{1,0}}$ to decode a subfile that it does not have in its cache. If we can argue that each equation provides a distinct file part then our argument is complete. Towards this end suppose that there exist sets of blocks $\{B_{2, l_2}, \dots, B_{k, l_k}\}$ and $\{B_{2, l_2'}, \dots, B_{k, l_k'}\}$ such that  $\{B_{2, l_2}, \dots, B_{k, l_k}\} \neq \{B_{2, l_2'}, \dots, B_{k, l_k'}\}$, but $\cap_{i=2}^k B_{i,l_i} = \cap_{i=2}^k B_{i,l_i'} = \{\beta\}$ for some $\beta \in [q^{k-1}]$. This is a contradiction since this in turn implies that $\cap_{i=2}^k B_{i,l_i} \bigcap \cap_{i=2}^k B_{i,l_i'} = \{\beta\}$, which is impossible since two blocks from the same parallel class have an empty intersection.

Finally, we calculate the rate of the delivery phase algorithm. We transmit a total of $q^{k-1}(q-1)$ equations, where each symbol is of size $F/q^{k-1}$. Thus, the rate is given by,
\begin{align*}
R &= q^{k-1}(q-1)\frac{F}{q^{k-1}}\\
&= (q-1)F.
\end{align*}
\endproof

\begin{lemma}
	\label{lemma:subpack_comp_1}
    Suppose $K=qk$ and let $q$ be fixed. Then,
    \begin{align*}
    \lim_{k\to \infty} \frac{1}{kq}\log_2 {\frac{F_s^{MN}}{F_s^*}}&=\bigg{(}1-\frac{1}{q}\bigg{)}\log_2 \bigg{(}\frac{q}{q-1}\bigg{)}.
    \end{align*}
\end{lemma}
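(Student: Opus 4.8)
The plan is to reduce the limit to a standard asymptotic estimate of a central binomial coefficient via Stirling's formula. First I would write the ratio explicitly as $F_s^{MN}/F_s^* = \binom{qk}{k}/q^{k-1}$, take base-$2$ logarithms, and split it as
\begin{align*}
\frac{1}{kq}\log_2 \frac{F_s^{MN}}{F_s^*} &= \frac{1}{kq}\log_2\binom{qk}{k} - \frac{k-1}{kq}\log_2 q.
\end{align*}
The second term tends to $\frac{1}{q}\log_2 q$ as $k\to\infty$, so the entire problem concentrates on the exponential growth rate of $\binom{qk}{k}$.

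Next I would invoke the entropy form of Stirling's approximation: for integers $n$ and $m$ with $m/n = \alpha$ fixed, one has $\log_2\binom{n}{m} = n\,H(\alpha) + O(\log n)$, where $H(x) = -x\log_2 x - (1-x)\log_2(1-x)$ denotes the binary entropy function. Applying this with $n = qk$, $m = k$, so that $\alpha = 1/q$, yields $\log_2\binom{qk}{k} = qk\,H(1/q) + O(\log k)$. Dividing by $kq$, the error term is $O((\log k)/k)$, which vanishes in the limit, so $\frac{1}{kq}\log_2\binom{qk}{k} \to H(1/q)$.

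It then remains only to combine the two pieces and simplify. The limit equals $H(1/q) - \frac{1}{q}\log_2 q$. Expanding the entropy gives
\begin{align*}
H\!\left(\frac{1}{q}\right) &= \frac{1}{q}\log_2 q - \left(1-\frac{1}{q}\right)\log_2\!\left(\frac{q-1}{q}\right),
\end{align*}
so the $\frac{1}{q}\log_2 q$ contributions cancel, leaving $-\left(1-\frac{1}{q}\right)\log_2\frac{q-1}{q} = \left(1-\frac{1}{q}\right)\log_2\frac{q}{q-1}$, which is precisely the claimed value.

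I do not anticipate any serious obstacle, since the statement is a routine asymptotic. The only point requiring mild care is confirming that the sub-leading terms in Stirling's formula are genuinely $o(k)$ (in fact $O(\log k)$), so that they wash out after normalizing by $1/(kq)$; the remainder is elementary algebra with the entropy function.
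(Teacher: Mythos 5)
Your proposal is correct and follows essentially the same route as the paper: both reduce the limit to the standard entropy asymptotic $\frac{1}{n}\log_2\binom{n}{pn}\to H(p)$ (which the paper simply cites as well known, while you justify it via Stirling), subtract the $\frac{1}{q}\log_2 q$ contribution from $q^{k-1}$, and simplify $H(1/q)-\frac{1}{q}\log_2 q$ to $\bigl(1-\frac{1}{q}\bigr)\log_2\frac{q}{q-1}$. No gaps; your explicit tracking of the $O(\log k)$ error term is slightly more careful than the paper's one-line invocation.
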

	\begin{proof}
		It is well known \cite{graham1994concrete} that for $0\le p\le 1$ 
		\begin{equation}
		\label{equation:binom_approx}
				\lim_{n\to \infty} \frac{1}{n}\log_2 \binom{n}{pn}=H(p),
		\end{equation}
		where $H(\cdot)$ represents the binary entropy function. Using this
		\begin{align*}
		\lim_{k\to \infty}\frac{1}{qk}\log_2\frac{F_s^{MN}}{F_s^*}&=\lim_{k\to \infty} \frac{1}{qk}\log_2 \frac{\binom{qk}{k}}{q^{k-1}}\\
        &=H(\frac{1}{q})-\frac{\log_2 q}{q}\\
        &=\bigg{(}1-\frac{1}{q}\bigg{)}\log_2 \bigg{(}\frac{q}{q-1}\bigg{)}.
		\end{align*}
		
	\end{proof}
	
	\begin{lemma}
		\label{lemma:subpack_comp_2}
		Suppose $K=2k$, $\frac{M}{N}=\frac{1}{2}$ and $t=\frac{2k-2}{3}$. Then,
		\begin{align*}
			\lim_{k\to \infty}\frac{1}{k}\log_2\frac{F_s^{MN,MS}}{F_s^*}=&2.8.
		\end{align*}
	\end{lemma}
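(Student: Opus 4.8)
The plan is to follow the proof of Lemma \ref{lemma:subpack_comp_1} almost verbatim, reducing the logarithm of the ratio to a single binomial coefficient governed by the entropy estimate \eqref{equation:binom_approx}. First I would specialize to $q=2$, where our scheme gives $F_s^* = q^{k-1} = 2^{k-1}$ and the memory-sharing scheme gives $F_s^{MN,MS} = 2\binom{2k}{t}$ with $t = (2k-2)/3$. I would then write
\begin{align*}
\frac{1}{k}\log_2\frac{F_s^{MN,MS}}{F_s^*} = \frac{1}{k}\log_2\binom{2k}{t} + \frac{1}{k} - \frac{k-1}{k},
\end{align*}
so that the two elementary terms contribute $0$ and $-1$ in the limit, and the entire problem collapses to evaluating $\lim_{k\to\infty}\frac{1}{k}\log_2\binom{2k}{t}$.

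For that limit I would invoke \eqref{equation:binom_approx} with $n = 2k$ and $p = t/(2k)$. Because $t = (2k-2)/3$, the fraction $p = t/(2k)$ tends to $1/3$, so by continuity of the binary entropy function $\frac{1}{2k}\log_2\binom{2k}{t}\to H(1/3)$, and hence $\frac{1}{k}\log_2\binom{2k}{t}\to 2H(1/3)$. Combining this with the previous step, the limit equals $2H(1/3) - 1$; using $H(1/3) = \log_2 3 - \tfrac{2}{3}$ this is $2\log_2 3 - \tfrac{7}{3}$.

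The step I expect to be the main obstacle is justifying the use of \eqref{equation:binom_approx}, which as stated fixes the ratio $p$, whereas here $t/(2k)$ drifts with $k$ and $t$ need not be an integer for every $k$. I would make this rigorous either through a uniform version of the estimate or directly from Stirling, proving $\log_2\binom{2k}{\lfloor t\rfloor} = 2k\,H(t/2k) + O(\log k)$ uniformly near $p = 1/3$, after which the $O(\log k)$ term is harmless once divided by $k$. I would also build in a sanity check on the leading constant: under the sum model $F_s^{MN,MS} = \binom{2k}{t} + \binom{2k}{2k-t} = 2\binom{2k}{t}$ the limit is $2H(p) - 1$, which is bounded above by $1$ for every admissible $p$ (indeed $p\to 1/3$ gives $\approx 0.84$); obtaining a substantially larger constant would instead require the two partitions to combine multiplicatively, giving $4H(1/3) - 1$. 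Pinning down which model is intended is exactly the point I would settle before trusting the numerical value of the limit.
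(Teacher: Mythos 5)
Your method is exactly the one the paper intends: its entire proof of Lemma \ref{lemma:subpack_comp_2} is the sentence that it ``follows from eq.~(\ref{equation:binom_approx}) and basic algebraic manipulations,'' and your decomposition $\frac{1}{k}\log_2\frac{F_s^{MN,MS}}{F_s^*}=\frac{1}{k}\log_2\binom{2k}{t}+\frac{1}{k}-\frac{k-1}{k}$ followed by the entropy estimate is precisely that manipulation. Your technical caveats are handled correctly as well: the drift of $t/(2k)$ toward $1/3$ and the rounding of $t$ to an integer perturb $\log_2\binom{2k}{t}$ only by $O(\log k)$, which vanishes after division by $k$. The ``which model'' question you raise at the end is settled by the paper itself: it defines $F_s^{MN,MS}=2\binom{2k}{t}$ (the two sub-libraries are split into $\binom{2k}{t}$ and $\binom{2k}{2k-t}=\binom{2k}{t}$ subfiles, and these counts add), so the additive model is the intended one and there is nothing left for you to pin down.

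The consequence, which you should assert with confidence rather than hedge on, is that the limit is $2H(1/3)-1=2\log_2 3-\frac{7}{3}\approx 0.837$, not $2.8$. The printed value appears to be an arithmetic slip in the paper: $2H(1/3)+1\approx 2.84$ rounds to $2.8$, i.e., the contribution $\log_2 F_s^*=k-1$ of the proposed scheme seems to have been added to the exponent instead of subtracted. A sanity check confirms your number: $\frac{1}{k}\log_2\frac{F_s^{MN,MS}}{F_s^*}=\frac{1}{k}\log_2\binom{2k}{t}-1+o(1)\le 2-1=1$ for every admissible $t$, so no choice of $t$ can yield a limit of $2.8$ under the paper's definitions. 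Your derivation is correct and, apart from the numerical value in the statement, coincides with the paper's intended argument; the qualitative conclusion that the memory-sharing alternative still has exponentially larger subpacketization survives, only with a smaller exponent.
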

	\begin{proof}
This follows from eq. (\ref{equation:binom_approx}) in Lemma \ref{lemma:subpack_comp_1} and basic algebraic manipulations.
	\end{proof}
\end{document}